\title{Robust Algorithms for Finding Triangles and 
Computing the Girth in Unit Disk and Transmission Graphs}
\titlerunning{Robust Algorithms for Unit Disk and Transmission Graphs}
\author{Katharina Klost}{Institut f\"ur Informatik, Freie Universit\"at Berlin, Germany}{kathklost@inf.fu-berlin.de}{}{}
\author{Wolfgang Mulzer}{Institut f\"ur Informatik, Freie Universit\"at Berlin, Germany}{mulzer@inf.fu-berlin.de}{https://orcid.org/0000-0002-1948-5840}{}
\authorrunning{K.~Klost and W.~Mulzer}
\keywords{robust algorithm, unit disk graph, girth}
\begin{document}

\maketitle

\begin{abstract}
We describe optimal \emph{robust} algorithms for finding 
a triangle and the unweighted girth in a unit disk graph, 
as well as finding a triangle in a transmission graph.
In the robust setting, the input is not given as 
a set of sites in the plane, but rather as an abstract graph.
The input may or may not be realizable as a unit disk graph 
or a transmission graph.
If the graph is realizable, the 
algorithm is guaranteed to give the correct answer. If not, 
the algorithm will either give a correct answer or correctly state 
that the input is not of the required type.
\end{abstract}

\section{Introduction}

Suppose we are given a set $S \subseteq \mathbb{R}^2$ 
of $n$ \emph{sites} in the plane, where each site $s \in S$ has an 
\emph{associated radius} $r_s > 0$. The \emph{disk graph} $D(S)$ 
on $S$ is defined as $D(S) = (S, E)$, where 
$E = \{ st \mid \| st \| \leq r_s + r_t\}$, with $\| st\|$ 
being the Euclidean distance between the sites $s$ and $t$. 
If all associated radii are $1$, $D(S)$ is called a 
\emph{unit disk graph}.
The \emph{transmission graph} on $S$
is the \emph{directed} graph with vertex set $S$ and a directed
edge $st$ from site $s$ to site $t$ if and only
if $\| st \| \leq r_s$, i.e., if and only if the $t$ 
lies inside the disk of radius $r_s$ centered at $s$. 
If all radii are equal, the edges $st$  and $ts$ are 
always either both present or both absent, for any two sites
$s, t \in S$, and the resulting 
transmission graph is equivalent to a unit disk graph.
Thus, for transmission graphs, the interesting case is 
that the associated radii are not all the same.

There is a large body of literature on (unit) disk graphs, see, 
e.g.,~\cite{cabello_shortest_2015,clark_unit_1990,graf_coloring_1998,
kaplan_dynamic_2021,kaplan_triangles_2019}.
Transmission graphs are not as widely studied, but recently they
have received some 
attention~\cite{kaplan_triangles_2019,kaplan_spanners_2015}.
Even though disk graphs and transmission graphs may have up to
$\Omega(n^2)$ edges, they can be described
succinctly with $O(n)$ numbers, namely the coordinates of the
sites and the associated radii. Thus, the underlying geometry
often makes it possible to 
find efficient algorithms whose running time depends only on $n$.

In the setting where a unit disk graph or a transmission graph 
is given as an abstract graph, 
not much is known. 
One possible explanation is that the problem of deciding 
whether an abstract graph is a (unit) disk graph or a transmission graph 
is $\exists\mathbb{R}$-hard~\cite{kang_sphere_2012,klost_complexity_2017}.
In fact, Kang and M\"uller~\cite{kang_sphere_2012} show that 
there are unit disk graphs whose coordinates need 
an exponential number of bits in their representation, so even 
if it is known that the input is a unit disk graph, it 
is not clear that a realization of the graph can be efficiently computed.
As transmission graphs with unit radii are equivalent to unit disk graphs, 
the result carries over to transmission graphs.

Raghavan and Spinrad~\cite{raghavan_robust_2003} introduced 
a notion of \emph{robust} algorithms in \emph{restricted domains}.
A restricted domain is a subset of the possible inputs. In 
our case, it will be the domain of unit disk graphs or transmission graphs 
as a subdomain of all abstract graphs.
Contrary to the \emph{promise} setting, in which the algorithm 
only gives guarantees for inputs from the restricted domain, 
the output in the robust setting must always be useful.
If the input comes from the restricted domain, 
the algorithm must always return a correct result.
If the input is not from the restricted domain, the algorithm 
may either return a correct result, or correctly state that 
the input does not meet the requirement.
Raghavan and Spinrad~\cite{raghavan_robust_2003} give a 
robust polynomial-time algorithm for the \textsc{Clique}-problem 
in unit disk graphs.

The problem of finding a triangle or of computing the \emph{girth}
(the shortest unweighted cycle) in a graph is a basic 
algorithmic question in graph theory.
The best know algorithm for general graphs uses matrix multiplication 
and runs in either $O(n^\omega)$ or $O(n^{2\omega/(\omega + 1)})$ time, 
where $\omega \leq 2.371552$ is the matrix multiplication 
constant~\cite{alon_finding_1997,gall_powers_2014-arxiv,itai_finding_1978,
vassilevskawilliams_new_2024}.
The best combinatorial algorithm needs 
$O\left(n^3/2^{\Omega(\sqrt[7]{\log n})}\right)$ 
time~\cite{yu_improved_2015-arxiv,abboud_new_2023}.

For special graph classes, better results are known.
In the case of \emph{planar} graphs, Itai and 
Rodeh~\cite{itai_finding_1978}, and, independently, 
Papadimitriou and Yannakakis~\cite{papadimitriou_clique_1981} 
show that a triangle can be found in $O(n)$ time, if it exists.
Chang and Lu~\cite{chang_computing_2013} give an $O(n)$ time algorithm 
for computing the unweighted girth in an undirected planar graph.
Kaplan et al.\@\cite{kaplan_triangles_2019} show that in the 
geometric setting, finding a triangle and computing the unweighted 
girth can be done in $O(n\log n)$ time for general disk graphs and in 
$O(n\log n)$ expected time for transmission graphs. 
They also give algorithms with the same expected running time, for 
finding the smallest weighted triangle in disk graphs and transmission 
graphs, as well as for computing the weighted girth of a disk graph.
In the geometric setting, there are $\Omega(n\log n)$ 
lower bounds for finding (short) triangles and computing the 
(weighted) girth in the algebraic decision tree 
model~\cite{klost_geometric_2021,Polishchuk17}.

In this paper, we show that there are $O(n)$ 
time algorithms for finding a triangle and computing 
the girth in unit disk graph, in the robust setting. 
Furthermore, we extend the ideas to an algorithm for finding a 
triangle in a transmission graph in $O(n + m)$ time.
The running times for the algorithms in the unit disk graph setting 
can be sublinear in the input size, as the input in the robust setting 
consists of a representation of all vertices and edges.
In particular, the result is better than the $\Omega(n\log n)$ 
lower bound for the geometric setting, because this lower bound stems
from the difficulty of finding the edges of the graph.
The running time for transmission graphs is linear in the input size, and
it is significantly faster than the currently fastest algorithm 
for general graphs~\cite{kang_sphere_2012}.

\section{Preliminaries}

We assume that the input is an abstract unweighted graph 
$G = (V, E)$, given as an adjacency list.
For an undirected graph, given a vertex $v$, we denote by 
$N(v)$ the set of vertices that are adjacent to $v$, and 
by $\deg(v) = |N(v)|$ the degree of $v$.
In the adjacency list representation, a set of $k$ neighbors of $v$ 
can be reported in $O(k)$ time, and testing if two vertices $u$ and $v$ 
are adjacent takes $O(\min(\deg(v), \deg(u)))$ time.
For a directed graph, let $N_{\text{in}}(v)$ and $N_{\text{out}}(v)$ 
be the vertices connected by an incoming or outgoing edge to a vertex $v$, 
respectively. Let $N_{\text{bi}} = N_{\text{in}} \cap N_{\text{out}}$ 
be the vertices connected by both an incoming and an outgoing edge.

\begin{definition}[Raghavan and Spinrad~\cite{raghavan_robust_2003}]
A \emph{robust} algorithm for a problem $P$ on a domain $C$ solves 
$P$ correctly, if the input is from $C$.
If the input is not in $C$, the algorithm may either produce a correct answer 
for $P$ or report that the input is not in $C$.
\end{definition}

\section{Unit disk graphs}

Our algorithms make use two key properties. Both properties 
are well known. For completeness, we include proofs for the variants we use.

\begin{lemma}\label{lem:deg_bound}
Let $G = (V, E)$ be a graph that is realizable as an unit disk graph, 
and let $v \in V$ be a vertex with $\deg(v) > 5$.
Then, the subgraph induced by $v$ and any six adjacent 
vertices contains a triangle.
\end{lemma}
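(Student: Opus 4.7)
The plan is to use a geometric realization together with a standard pigeonhole argument on angles at the vertex $v$.

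First I would fix any realization of $G$ as a unit disk graph, so that every vertex $u$ corresponds to a point $p_u \in \mathbb{R}^2$, and two vertices $u, w$ are adjacent if and only if $\|p_u - p_w\| \le 2$. Let $p = p_v$, and let $u_1, \ldots, u_6$ be any six neighbors of $v$ with corresponding points $q_1, \ldots, q_6$. By definition, each $q_i$ lies in the closed disk $B(p, 2)$, and we may assume that no $q_i$ coincides with $p$ (otherwise $p_{u_i} = p$ would force all six neighbors to be mutually adjacent, since they all lie within distance $2$ of this common point, and a triangle is immediate).

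Next, I would sort the $q_i$ by the angle their direction from $p$ makes with the positive $x$-axis. The six angular gaps between cyclically consecutive $q_i$'s sum to $2\pi$, so by pigeonhole some pair of cyclically consecutive neighbors, say $q_i$ and $q_j$, subtends an angle at $p$ of at most $2\pi/6 = \pi/3$.

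The key geometric step is to show that $\|q_i - q_j\| \le 2$, which would give the edge $u_i u_j$ in $G$ and hence the triangle $v u_i u_j$. For this I would use the fact that in any triangle, the longest side lies opposite the largest angle. In the triangle $p q_i q_j$, the angle at $p$ is at most $\pi/3$, so the remaining two angles sum to at least $2\pi/3$ and hence at least one of them is at least $\pi/3$. Therefore the angle at $p$ is not strictly the largest, so the side $q_i q_j$ opposite $p$ is not strictly the longest, giving $\|q_i - q_j\| \le \max(\|p - q_i\|, \|p - q_j\|) \le 2$.

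I do not expect a serious obstacle here; the only subtlety is the degenerate situation where two neighbors correspond to the same point or lie on the ray from $p$, but both cases trivially yield adjacency among the neighbors and thus a triangle with $v$.
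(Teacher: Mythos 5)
Your proof is correct and takes essentially the same route as the paper's: fix a realization, apply the pigeonhole principle to the six angular gaps around $v$ to find two neighbors subtending an angle of at most $\pi/3$, and conclude that they are within distance $2$ of each other. Your finishing step via the side--angle inequality in the triangle $p\,q_i\,q_j$ handles the boundary case of an angle exactly $\pi/3$ in one stroke, where the paper splits into a strict-inequality sector argument and a separate regular-hexagon case; the only nitpick is your degenerate aside, where ``all six neighbors lie within distance $2$ of the common point $p$'' does not make them \emph{mutually} adjacent --- what is true and sufficient is that $u_i$, sitting at $p$, is within distance $2$ of every other neighbor of $v$, which already yields a triangle.
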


\begin{proof}
Consider a realization of the unit disk graph in the plane. 
We identify the vertices with the 
corresponding sites.  Let $u_0, \dots, u_5 \in N(v)$ be any
six neighbors of $v$, labelled in clockwise order
around $v$. 

Let 
$\alpha_i = \sphericalangle u_i v u_{i+1}$, $i = 0, \dots 5$, be the
angles between the consecutive neighbors with respect to $v$, where 
the indices are taken modulo $6$.
Note that $\sum_{i = 0}^5 \alpha_i = 2\pi$,
and suppose that 
$\alpha_0$ is a minimum angle in this sequence.

First, suppose that  $\alpha_0 < \pi/3$. Then, there is a cone with opening angle 
$\pi /3$ and apex $v$ that contains the sites $u_0$ and $u_1$. 
Since $u_0$ and $u_1$ are adjacent to $v$,  
we have $\| u_0v\| \leq 2$ and  $\| u_1v \| \leq 2$, so $u_0$ and $u_1$ 
both lie inside a circular sector with angle $\pi / 3$, apex $v$, and 
radius $2$.
This circular sector has diameter $2$. Thus, all points in it, in particular 
$u_0$ and $u_1$, have mutual distance at most $2$. If follows that $u_0$ and $u_1$ are 
connected by an edge, closing a triangle, see \autoref{fig:udg_deg}, left.

Second, suppose that $\alpha_0 \geq \pi/3$. Then, since $\alpha_0$ is minimum
and since the $\alpha_i$ sum to $2\pi$, 
it follows that  $\alpha_i = \pi/3$, for $i = 0, \dots, 5$, so the $u_i$ lie on
six concentric, uniformly spaced rays that emanate from $v$.
In this case, the maximum possible mutual distance between the $u_i$'s is achieved 
when the sites constitute the corners of a regular hexagon with center $v$ 
and $\|u_iv\| = 2$, for $i = 0, \dots, 5$. 
This hexagon decomposes into six equilateral triangles of side length $2$, 
and thus the unit disk graph contains all consecutive edges $\{u_i, u_{i+1}\}$, 
closing a triangle, see \autoref{fig:udg_deg}, right.
\begin{figure}
\centering
\includegraphics{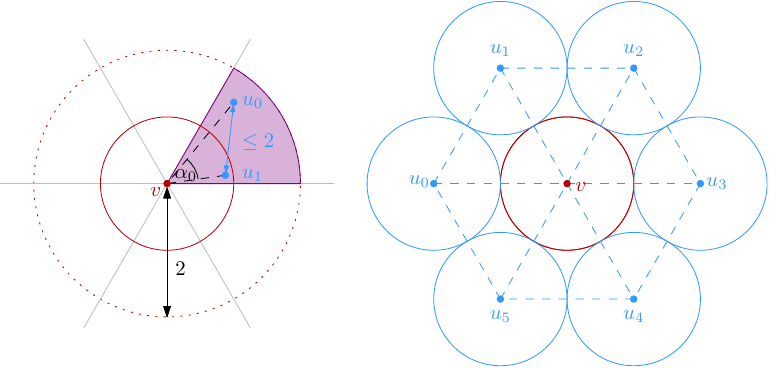}
\caption{Left: The disk defined by $v$ is marked in solid red. 
There are two sites in the shaded sector. The diameter of the sector is $2$,
so $u_0$ and $u_1$ are connected by an edge.\\
Right: The mutual distance between the $u_i$ is maximized on the vertices of a regular polygon. 
	The side length of each equilateral triangle is $2$, and thus the desired edges exist.}
\label{fig:udg_deg}
\end{figure}
\end{proof}

\begin{lemma}\label{lem:non-triangle-planar}
If a (unit) disk graph does not contain a triangle, it is planar.
\end{lemma}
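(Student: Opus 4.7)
The plan is to realize the (unit) disk graph geometrically and draw every edge as a straight line segment between its two endpoints. I would then show that, under the triangle-free assumption, this straight-line drawing is crossing-free, which immediately yields planarity. So the work reduces to deriving a contradiction from the assumption that two edges $ab$ and $cd$ cross at some point $p$.

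The first step is to observe that the crossing point $p$ lies within distance $r_a$ of $a$ or within distance $r_b$ of $b$, and similarly for $c, d$. This is because $\|ap\| + \|pb\| = \|ab\| \leq r_a + r_b$, so we cannot have both $\|ap\| > r_a$ and $\|pb\| > r_b$. Hence there exist $x \in \{a, b\}$ and $y \in \{c, d\}$ with $\|xp\| \leq r_x$ and $\|yp\| \leq r_y$, so $\|xy\| \leq \|xp\| + \|py\| \leq r_x + r_y$, meaning $xy$ is an edge. After relabeling, I may assume this new edge is $ac$.

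Now I would exploit triangle-freeness. The edges $ab$, $ac$, and $cd$ are all present, so to avoid the triangles $abc$ and $acd$, the edges $bc$ and $ad$ must be absent, giving $\|bc\| > r_b + r_c$ and $\|ad\| > r_a + r_d$. Adding yields $\|bc\| + \|ad\| > r_a + r_b + r_c + r_d$. On the other hand, the triangle inequality applied at $p$ along the two crossing segments gives $\|bc\| + \|ad\| \leq (\|bp\| + \|pc\|) + (\|ap\| + \|pd\|) = \|ab\| + \|cd\| \leq r_a + r_b + r_c + r_d$, the desired contradiction.

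The only real subtlety is extracting the right auxiliary edge from the crossing: once the single ``diagonal'' edge $ac$ is in hand, it interacts with both $ab$ and $cd$, so triangle-freeness forbids exactly the two non-edges whose total length is bounded above by $\|ab\| + \|cd\|$ through the triangle inequality at $p$. No property specific to unit radii is used, so the same argument works uniformly for (unit) disk graphs.
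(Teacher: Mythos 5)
Your proof is correct, and it is more self-contained than the paper's: the paper simply cites Evans et al.\ and Kaplan et al.\ for the fact that a crossing in the natural straight-line embedding of a disk graph forces a triangle, whereas you actually prove this fact. Your argument is also clean and genuinely general: the first step (the crossing point $p$ must lie in the disk of one endpoint of each segment, hence some ``diagonal'' $xy$ is an edge) and the final step (the two forbidden non-edges $bc$ and $ad$ have total length at most $\|ab\|+\|cd\|\leq r_a+r_b+r_c+r_d$ by routing both through $p$, contradicting $\|bc\|+\|ad\|>r_a+r_b+r_c+r_d$) fit together exactly as you describe, and indeed nothing about unit radii is used, which matches the lemma's claim for general disk graphs. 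Two small points worth acknowledging if you write this up: a proper crossing forces the four endpoints to be distinct, so the relabeling to $ac$ is legitimate by the $a\leftrightarrow b$ and $c\leftrightarrow d$ symmetries; and, as with the cited proofs, degenerate realizations (coincident sites, or a site lying on the interior of another edge) need a word of perturbation or a separate remark before ``crossing-free straight-line drawing'' can be upgraded to ``planar.'' Neither issue affects the substance of your argument.
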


\begin{proof}
Given the geometric representation of a unit disk graph, 
the natural embedding of this unit disk graph into the plane is 
by connecting the sites that represent the vertices by line segment.
Evans et al.\@\cite{evans_recognizing_2016}, as well as 
Kaplan et al.\@\cite{kaplan_triangles_2019} show that if this 
embedding is not plane, there has to be a triangle in the unit disk graph.
Conversely, this implies that if there is no triangle in the graph, 
the natural embedding is crossing free, directly implying 
that the graph is planar.
\end{proof}

\subsection{Finding a Triangle}\label{subsec:udgtriangle}

\begin{theorem}\label{thm:triangle}
There is a robust algorithm to find a triangle in a disk graph in 
$O(n)$ time.
\end{theorem}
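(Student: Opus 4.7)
The plan is to use \Cref{lem:deg_bound} together with a direct enumeration in the remaining case; the argument is conducted within the unit disk graph setting established by this section. For each vertex $v$, deciding whether $\deg(v) > 5$ can be done in $O(1)$ time by requesting the first six entries of its adjacency list, so a single pass over $V$ either finds a high-degree vertex or certifies that none exists.

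If some vertex $v$ with $\deg(v) > 5$ is found, I commit to it and retrieve six arbitrary neighbors $u_0, \dots, u_5$. Each of the $\binom{6}{2} = 15$ pairs $(u_i, u_j)$ is tested for adjacency; a single test costs $O(\min(\deg(u_i), \deg(u_j)))$, i.e., $O(n)$ in the worst case, but only $15$ such tests are performed, so the branch finishes in $O(n)$ total. If some pair $u_iu_j$ is an edge, output the triangle $\{v, u_i, u_j\}$. Otherwise, \Cref{lem:deg_bound} is violated, so the input is not realizable as a unit disk graph, and the algorithm reports this.

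If every vertex has $\deg(v) \leq 5$, then $m \leq 5n/2$, and the entire adjacency structure is accessible in $O(n)$ time. For each $v$ I iterate over the at most $\binom{5}{2} = 10$ unordered pairs of its neighbors $u, w$ and test each pair for adjacency in $O(1)$ time, since both endpoints have degree at most $5$. The enumeration covers every potential triangle, so any triangle in $G$ is discovered; if none is, the algorithm outputs ``no triangle,'' which is a correct answer regardless of whether $G$ is a unit disk graph.

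Correctness in the robust sense follows immediately: any triangle reported is witnessed, ``no triangle'' is returned only after an exhaustive search in the low-degree branch, and ``not a unit disk graph'' is returned only when \Cref{lem:deg_bound} is provably violated. The main delicacy is keeping the high-degree branch within the $O(n)$ budget; this is ensured by committing to the \emph{first} witness of $\deg(v) > 5$ rather than iterating over all such vertices, so only a constant number of $O(n)$-cost adjacency tests are performed overall.
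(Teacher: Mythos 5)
Your proposal is correct and matches the paper's proof essentially step for step: find a vertex of degree greater than $5$ and test a constant number of its neighbors pairwise (invoking \Cref{lem:deg_bound} to justify reporting ``not a unit disk graph'' on failure), and otherwise exhaustively check all neighbor pairs in the bounded-degree graph in $O(n)$ total time. The only cosmetic difference is that you test six neighbors where the paper takes seven, which is fine since \Cref{lem:deg_bound} already guarantees a triangle among $v$ and any six neighbors.
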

\begin{proof}
The algorithm works as follows.
If there is no vertex $v$ with $\deg(v) > 5$, 
then we check explicitly for every vertex whether two of its neighbors 
are adjacent. If so, we have found a triangle. If not, there is none.
As all degrees are constant, this takes $O(1)$ time per vertex, for a total 
of $O(n)$ time.

Now, assume there is a vertex $v$ with $\deg(v) >5$. 
Let $N'(v)$ be a set of any seven neighbors of $v$.
For every pair of neighbors $u$, $w$ from $N'(v)$, explicitly check if 
there is an edge between $u$ and $w$.
If an edge is found, report the triangle $u,v,w$.
Otherwise, report that the input is not a unit disk graph.
This step takes $O(n)$ time to identify $v$ and then at most $O(n)$ 
time to check the adjacencies for each of the $O(1)$ vertices in $N'(v)$,
summing up to $O(n)$ total time.
Note that in the case that not all degrees are at most five, only 
one vertex is considered in detail.

To see that the algorithm is correct, we consider all possible cases. 
If the maximum degree of the graph is at most $5$, all 
vertices and their neighbors are explicitly checked. So if there is a triangle 
in the graph, the algorithm will find it and correctly report it. 
Furthermore, no triangles can be missed.
Otherwise, there is a vertex $v$ with degree larger than $5$. 
If the input is a disk graph, \autoref{lem:deg_bound} guarantees that 
there is a triangle in $N'(v)$.
The algorithm explicitly searches for such edge between vertices of \(N(v')\). 
If such an edge is found, the triangle is correctly reported.
In the other case, \autoref{lem:deg_bound} implies that the input is 
not a unit disk graph, as reported by the algorithm.
\end{proof}

\subsection{Computing the Girth}

\begin{theorem}
There is a robust algorithm to compute the girth of a 
graph in the domain of unit disk graphs that runs in $O(n)$ time.
\end{theorem}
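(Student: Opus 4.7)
The plan is to reduce the girth computation to three linear-time subroutines chained together: the triangle-finding algorithm of Theorem~\ref{thm:triangle}, a linear-time planarity test (e.g., Hopcroft--Tarjan), and the planar-girth algorithm of Chang and Lu~\cite{chang_computing_2013}.

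First, I would invoke Theorem~\ref{thm:triangle} on $G$. If it returns a triangle, then $G$ has girth $3$, which is a correct answer regardless of whether $G$ is realizable as a unit disk graph. If it reports that the input is not a unit disk graph, I propagate that answer. The only remaining case is when the triangle subroutine certifies that $G$ is triangle-free; inspecting its proof, this only happens in the branch where every vertex of $G$ has degree at most $5$, so $|E| \le 5n/2 = O(n)$ at this point.

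Next, I would run a linear-time planarity test on $G$; since $|E| = O(n)$, this takes $O(n)$ time. If $G$ is not planar, then by the contrapositive of Lemma~\ref{lem:non-triangle-planar}, $G$ cannot be a unit disk graph, and I report this. Otherwise, $G$ is triangle-free and planar, and I invoke the Chang--Lu algorithm to compute its girth in $O(n)$ time, returning that value.

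For robustness, the case analysis mirrors the outputs of the three subroutines. If $G$ is a unit disk graph, then either a triangle is found (and $3$ is the correct girth), or Lemma~\ref{lem:deg_bound} forces the maximum degree to be at most $5$ and Lemma~\ref{lem:non-triangle-planar} guarantees planarity, so Chang--Lu returns the true girth. If $G$ is not a unit disk graph, the triangle subroutine or the planarity test may independently detect this; otherwise $G$ must be triangle-free and planar, and Chang--Lu still returns the correct girth of $G$, which is an admissible robust answer. The main obstacle I anticipate is simply ensuring that Chang--Lu is only invoked on a graph we have certified to be planar, which is precisely what the planarity test guarantees; a secondary subtlety is that running planarity detection in $O(n)$ time requires the preceding step to have already bounded $|E|$ linearly in $n$, which is why extracting the degree information from the proof of Theorem~\ref{thm:triangle}, rather than merely its triangle-freeness verdict, is essential.
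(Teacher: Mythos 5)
Your proposal is correct and follows essentially the same route as the paper: run the robust triangle algorithm, then a linear-time planarity test, then the Chang--Lu planar girth algorithm, with the same case analysis for robustness. The extra observation that the triangle-free branch forces maximum degree at most $5$ and hence $|E|=O(n)$ is a detail the paper leaves implicit, and it is a valid justification for the $O(n)$ bound on the planarity test.
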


\begin{proof}
First, run the algorithm from \autoref{thm:triangle} on the input. 
If the algorithm determines that the input graph is not a unit disk graph, 
report this and finish.
If the algorithm found a triangle, the girth of the graph is three 
and can be reported.

If the algorithm from \autoref{thm:triangle} did not find a triangle and 
did not report that the graph is not a unit disk graph, we 
use a linear time planarity testing algorithm on the graph,
e.g., the algorithm described by Hopcroft and 
Tarjan~\cite{hopcroft_efficient_1974}.
If the graph is not planar, report that it is not a unit disk graph.
In the other case, the algorithm for computing the girth in a planar graph by 
Chang and Lu~\cite{chang_computing_2013} can be used to compute the girth 
of the graph in $O(n)$ time.

By combining the running times for each step, the overall running time 
follows.
Correctness follows from \autoref{lem:non-triangle-planar} and 
the correctness of the algorithm by Chang and Lu.
\end{proof}

\section{Finding a directed triangle in a transmission graph}

The following key lemma needed for the robust algorithm for 
transmission graphs was previously shown by Klost~\cite{klost_geometric_2021}.
We include a proof for completeness.

\begin{lemma}\label{lem:tg_bounded}
If $G$ is a transmission graph and $v$ is a vertex with 
$|N_\text{bi}(v)| > 6$, then $G$ contains a triangle.
\end{lemma}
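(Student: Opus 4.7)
The plan is to mimic the angular pigeonhole argument from \autoref{lem:deg_bound}, but with seven neighbors instead of six, and to carefully track which direction the ``closing'' edge goes, since the graph is now directed and the vertices involved need not all have the same radius.

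Fix any seven vertices $u_0, u_1, \dots, u_6 \in N_{\text{bi}}(v)$ and consider a realization of $G$ as a transmission graph. Label the $u_i$ in clockwise order around $v$ and let $\alpha_i = \sphericalangle u_i v u_{i+1}$ (indices mod $7$). Since the $\alpha_i$ sum to $2\pi$, by pigeonhole there is a consecutive pair $u, w$ with $\sphericalangle u v w \leq 2\pi/7 < \pi/3$. The goal is to show that $u$ and $w$ are joined by a directed edge, which together with the bi-directional edges between $v$ and each of $u, w$ produces a directed triangle through $v$.

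Without loss of generality assume $\|vu\| \leq \|vw\|$. In the triangle $uvw$ the angle at $v$ is strictly less than $\pi/3$, which is smaller than either of the other two angles. Since in a Euclidean triangle larger angles are opposite longer sides, the side $uw$ opposite $v$ is strictly shorter than the longest side adjacent to $v$; hence $\|uw\| < \|vw\|$. (The same fact also follows directly from the law of cosines, using $\cos \sphericalangle u v w > 1/2$.) Because $w \in N_{\text{bi}}(v)$, we have $\|vw\| \leq r_w$, and therefore $\|uw\| < r_w$, so the directed edge $w \to u$ exists. Since $v \to w$ (as $w \in N_{\text{out}}(v)$) and $u \to v$ (as $u \in N_{\text{in}}(v)$), we obtain the directed triangle $v \to w \to u \to v$. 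The symmetric case $\|vw\| \leq \|vu\|$ yields $\|uw\| < r_u$, giving the directed triangle $v \to u \to w \to v$.

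The only subtle point is the geometric claim that a triangle angle strictly less than $\pi/3$ forces the opposite side to be strictly shorter than the larger adjacent side; everything else is a direct verification of radius inequalities. The factor of $7$ (rather than $6$) is precisely what converts the pigeonhole bound on the minimum angle from $\leq \pi/3$ to the strict $< \pi/3$ needed for this comparison.
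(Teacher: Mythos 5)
Your proof is correct and takes essentially the same route as the paper: a pigeonhole argument over directions around $v$ followed by showing that the ``farther'' of the two close neighbors covers the other with its own disk (the paper uses six fixed sectors of $D_v$ and closes the edge via $\|uw\|\le\|uv\|\le r_u$ for the neighbor whose projection onto the sector bisector is farther, while you sort seven neighbors angularly and use $\|uw\|<\|vw\|\le r_w$). One small imprecision: the claim that the angle at $v$ is smaller than \emph{both} other angles of the triangle $uvw$ need not hold (one of the other angles can be tiny); what you actually need --- and what your law-of-cosines remark does establish rigorously --- is only that $uw$ is shorter than the longer of the two sides incident to $v$.
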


\begin{proof}
The proof is similar to that of \autoref{lem:deg_bound}. 
Consider a representation of $G$ in the plane and the disk $D_v$ 
associated to the site $v$. By definition, all sites in 
$N_\text{bi}(v)$ lie in $D_v$.
Subdivide $D_v$ into six congruent circular sectors, as in
\autoref{fig:tg_deg}. By the pigeonhole principle, 
one sector $C$ contains at least two vertices $u$ and $w$ from 
$N_\text{bi}(v)$.

\begin{figure}
\centering
\includegraphics{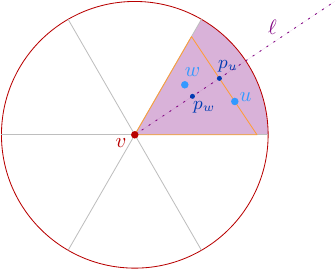}
\caption{The disk defined by $v$ is marked in solid red. 
There are two sites in the shaded sector. The distance between two points 
in the triangle is at most $\| uv \|$.}
\label{fig:tg_deg}
\end{figure}

Let $\ell$ be the perpendicular bisector of $C$ and let $p_u$ and $p_w$ 
be the perpendicular projections of $u$ and $w$ onto $\ell$.
W.l.o.g., suppose that $p_w$ is closer to $v$ than $p_u$.
Consider the equilateral triangle $\triangle$ defined by the line 
through $u$ perpendicular to $\ell$ and the rays defining $C$.
Then, $w$ is contained in $\triangle$, and thus 
$\| uw \| \leq \| uv\| \leq r_u$. Hence, the edge $uw$ 
exists in the transmission graph, closing the directed triangle.
\end{proof}

\begin{lemma}\label{lem:find_neighbors}
  The sets $N_\text{bi}(v)$, for all $v \in V$ can be found in $O(n + m)$ time.
\end{lemma}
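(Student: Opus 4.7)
The plan is to compute $N_\text{bi}(v) = N_\text{out}(v) \cap N_\text{in}(v)$ for each vertex $v$ by a single linear pass that uses a global boolean marker array to perform each pairwise intersection in time proportional to the degrees involved, without ever reinitializing the array.

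First, I would make sure that for each vertex $v$ both $N_\text{out}(v)$ and $N_\text{in}(v)$ are available as lists. Since the adjacency-list input gives, say, all outgoing neighborhoods, the incoming neighborhoods can be produced in $O(n+m)$ time by a standard bucket pass: scan all edges $(u,w)$ in $N_\text{out}(u)$ and append $u$ to $N_\text{in}(w)$. This preprocessing contributes $O(n+m)$ to the running time.

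Next, I would allocate a single boolean array $M[1\mathinner{\ldotp\ldotp}n]$, initialized to \emph{false} once upfront in $O(n)$ time. Then, for each vertex $v$ in turn, I would do three short sweeps: (i) mark all $u \in N_\text{out}(v)$ by setting $M[u] \leftarrow \text{true}$; (ii) scan $N_\text{in}(v)$ and, for each $u$ found there, append $u$ to $N_\text{bi}(v)$ whenever $M[u]$ is true; (iii) unmark all $u \in N_\text{out}(v)$ by resetting $M[u] \leftarrow \text{false}$. The critical point is that the unmark step in (iii) walks exactly the same list as (i), so the marker array is restored to the all-\emph{false} state before handling the next vertex, without any need to reset the entire array.

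The work performed for a fixed vertex $v$ is $\Theta(|N_\text{out}(v)| + |N_\text{in}(v)|)$, and summing over $v \in V$ gives $\sum_{v}(|N_\text{out}(v)| + |N_\text{in}(v)|) = 2m$. Together with the $O(n)$ initialization cost for $M$ and the $O(n+m)$ cost for building the reverse adjacency lists, the total running time is $O(n+m)$. Correctness is immediate from the definition of $N_\text{bi}$: after step (i), $M[u]$ is true iff $u \in N_\text{out}(v)$, so step (ii) collects exactly those $u \in N_\text{in}(v)$ that also lie in $N_\text{out}(v)$. There is no real obstacle here; the only thing to be careful about is that the marker array is reset incrementally rather than globally, which is what keeps the per-vertex work proportional to its own degrees and hence the total work linear.
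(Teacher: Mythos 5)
Your proof is correct, but it takes a genuinely different route from the paper. You compute the intersections $N_\text{out}(v) \cap N_\text{in}(v)$ with a single global marker array that is marked and then incrementally unmarked per vertex, charging the work to $\deg_\text{out}(v) + \deg_\text{in}(v)$; this is a sound and standard linear-time intersection technique, and your accounting (one-time $O(n)$ initialization, $2m$ total marking work, $O(n+m)$ for the reverse lists) is accurate. The paper instead performs two successive transposition passes over the adjacency lists, visiting source vertices in a fixed order $\preceq$, which yields adjacency lists $L_\preceq$ and $L_\preceq^\top$ for $G$ and $G^\top$ that are \emph{sorted} by $\preceq$; the sets $N_\text{bi}(v)$ are then obtained by merging the two sorted lists of each vertex. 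The trade-off is minor for the lemma itself, but the paper's choice pays off later: the main theorem reuses the sorted representations $L_\preceq$ and $L_\preceq^\top$ to strip out the bidirected edges and form the subgraph $G_\text{uni}$ in linear time. Your marker-array variant produces only the sets $N_\text{bi}(v)$; it could be extended to record, for each list entry, whether its reversal is present (and thereby support the same later steps), but as written it does not supply the sorted lists the rest of the algorithm relies on. As a proof of the stated lemma alone, it is complete.
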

\begin{proof}
  We assume that the adjacency list representation has the 
  following standard form: on the top level, there is an array that 
  can be indexed by the vertices in $V$.
  Each entry points to the head and the tail of a linked list. 
  This linked list contains all vertices $u$ such that $(v, u)\in E$, in no particular order.
  Note that the order of the array directly induces a total order $\preceq$ on $V$.

  We compute adjacency list representations $L_\preceq^\top$ and $L_\preceq$ 
  of the transposed graph $G^\top$ and the original graph $G$, with the additional 
  property that linked list are sorted according to $\preceq$.
  For this, we initialize a new array and traverse the original adjacency list 
  representation of $G$, by considering the vertices according to $\preceq$.
  For every edge $(v, u)$ that is encountered, we append $v$ to the linked list of $u$, 
  in $O(1)$ time.
  As the source vertices $v$ are traversed according to $\preceq$, this gives the 
  desired representation $L_\preceq^\top$ of $G^\top$, in time $O(n + m)$.
  After that, we can obtain the representation $L_\preceq$ of $G$ by the same procedure,  
  using $L_\preceq^\top$ as the initial adjacency list.
  Finally, to identify the sets $N_\text{bi}(v)$, for each $v \in V$, it suffices
  to merge the associated lists $L_\preceq$ and $L_\preceq^\top$, in total time
  $O(n + m)$. 
\end{proof}

\begin{observation}\label{lem:cycle_biedge}
In every directed cycle \(C\) in a transmission graph, there is least one vertex \(v\) with \(N_\text{bi}(v) \neq \emptyset\).
\end{observation}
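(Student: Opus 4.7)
The plan is to pick the vertex in the cycle with the largest associated radius and show that one of its cycle-neighbors witnesses a bidirectional edge. Specifically, fix a directed cycle $C = v_0, v_1, \ldots, v_{k-1}, v_0$ in the transmission graph, and consider a geometric realization. Let $v_i$ be a vertex of $C$ whose radius $r_{v_i}$ is maximum among the radii of vertices appearing in $C$ (breaking ties arbitrarily).

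Since $C$ is a directed cycle, there is an edge from its predecessor $v_{i-1}$ to $v_i$. By definition of the transmission graph, this edge existing means that $\|v_{i-1} v_i\| \leq r_{v_{i-1}}$. By the maximality of $r_{v_i}$, we have $r_{v_{i-1}} \leq r_{v_i}$, and hence $\|v_{i-1} v_i\| \leq r_{v_i}$ as well. This is exactly the condition guaranteeing that the reverse edge $v_i \to v_{i-1}$ is also present in the transmission graph.

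Therefore $v_{i-1} \in N_\text{in}(v_i) \cap N_\text{out}(v_i) = N_\text{bi}(v_i)$, so $N_\text{bi}(v_i) \neq \emptyset$, which is what we needed. I do not anticipate any real obstacle here: the only subtlety is making sure that the comparison of radii is between neighboring vertices of the cycle, which is exactly why choosing the maximum-radius vertex and pairing it with its cycle-predecessor works.
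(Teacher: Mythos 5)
Your proof is correct and is essentially the same argument as the paper's, just mirrored: the paper picks the \emph{minimum}-radius vertex and its cycle-successor, while you pick the \emph{maximum}-radius vertex and its cycle-predecessor; in both cases the radius comparison forces the reverse edge to exist.
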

\begin{proof}
  Let \(v\) be the site with the smallest radius on \(C\) and let \(u\) be the successor of \(v\) on \(C\).
  Then the edge \((v,u)\) exists by definition. Furthermore, the existence of this edge directly implies that \(\Vert uv \Vert \leq r_v \leq r_u\) and thus \((u,v)\) is also an edge of the transmission graph.
\end{proof}

\begin{theorem}
There is a robust algorithm that finds a directed 
triangle in a transmission graph in $O(n + m)$ time.
\end{theorem}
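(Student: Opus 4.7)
The plan combines Lemmas \ref{lem:tg_bounded} and \ref{lem:find_neighbors} with Observation \ref{lem:cycle_biedge}. Since Observation \ref{lem:cycle_biedge} asserts that every directed triangle in a transmission graph contains a bi-edge, the idea is to enumerate only bi-edges and try to close each one into a triangle; Lemma \ref{lem:tg_bounded} lets us short-circuit the search whenever some vertex has more than six bi-neighbours. The subtlety, which I expect to be the main obstacle, is that on non-transmission inputs the observation may fail, so an additional safety net is needed to certify ``no triangle'' correctly without risking a wrong answer.

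The algorithm has four phases. \textbf{(i)} Use Lemma \ref{lem:find_neighbors} to compute $N_\text{bi}(v)$ for every $v$. \textbf{(ii)} If some $v$ has $|N_\text{bi}(v)| > 6$, select any seven bi-neighbours $u_1, \dots, u_7$ and test every pair $\{u_i, u_j\}$ for the presence of an edge in either direction. A positive test yields a directed triangle through $v$; otherwise return ``not a transmission graph'', which is valid by Lemma \ref{lem:tg_bounded}. \textbf{(iii)} Otherwise every $|N_\text{bi}(v)| \le 6$, so $G$ has at most $3n$ bi-edges. For each bi-edge $\{a, b\}$ use a mark-and-scan to search the intersections $N_\text{in}(a) \cap N_\text{out}(b)$ and $N_\text{in}(b) \cap N_\text{out}(a)$; any witness $c$ closes a directed triangle. \textbf{(iv)} If nothing has been reported, form the strict-edge subgraph $H$ consisting of those $(v, u) \in E$ with $(u, v) \notin E$ and run a topological sort: if $H$ is acyclic, return ``no triangle''; otherwise return ``not a transmission graph''.

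Correctness of phases (ii) and (iii) is immediate from the cited lemma and observation, respectively; any triangle that the algorithm returns is verified directly from $E$. The delicate point is phase (iv). If $H$ is acyclic, then any hypothetical directed triangle of $G$ either contains a bi-edge, in which case phase (iii) would already have reported it, or consists of three strict edges forming a $3$-cycle in $H$, contradicting acyclicity; hence ``no triangle'' is truthful on every input, whether or not it is realisable as a transmission graph. Conversely, a cycle $v_1 \to \cdots \to v_k \to v_1$ in $H$ would force strict radius inequalities $r_{v_1} > \cdots > r_{v_k} > r_{v_1}$ in any transmission realisation, so the input cannot be realisable and the report is again correct.

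For the running time, phase (i) costs $O(n + m)$ by Lemma \ref{lem:find_neighbors}; phase (ii) performs $O(1)$ adjacency tests totalling $O(n)$ work; phase (iii) sums to $O(m)$ because each vertex sits on at most six bi-edges, so the mark-and-scan cost is bounded by $O\bigl(\sum_v 6(|N_\text{in}(v)| + |N_\text{out}(v)|)\bigr)$; and phase (iv) is a standard linear-time topological sort. The grand total is $O(n + m)$, as required.
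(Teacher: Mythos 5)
Your proposal is correct and follows essentially the same route as the paper: it uses the same three ingredients (Lemma~\ref{lem:find_neighbors} for preprocessing, Lemma~\ref{lem:tg_bounded} to handle a vertex with more than six bi-neighbours, and acyclicity of the strict-edge subgraph --- your $H$ is the paper's $G_\text{uni}$ --- justified by Observation~\ref{lem:cycle_biedge} for the remaining case). The only difference is that you run the acyclicity test after the bi-edge scan rather than before it, which is an equally valid ordering.
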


\begin{proof}
Preprocess the input as described in the proof of \autoref{lem:find_neighbors} in $O(n + m)$ time, 
such that the set $N_\text{bi}(v)$ is known for every vertex $v$. This also gives the representations \(L_\preceq\) and \(L_\preceq^\top\) which can be used to compute a subgraph \(G_\text{uni}\) of \(G\) by removing all bidirected edges.

We consider two cases.
In the first case, all vertices have $|N_\text{bi}(v)|\leq 6$.
Test in \(O(n+m)\) time if \(G_\text{uni}\) is acyclic.
If yes, report that \(G\) is not a transmission graph.
If not, consider for each vertex \(v\) the  edges \((u,v)\) with \(u\in N_\text{bi}(v)\) explicitly and check if there is a vertex \(w \) in \(N_\text{in}(v)\cap N_\text{out}(u) \neq \emptyset\). If yes, report the triangle $vuw$.
If no triangle was found after considering all vertices, report that \(G\) does not have a triangle.
This step take $O(n+m)$ overall time as the adjacency list of every vertex is traversed a constant number of times.
By \autoref{lem:cycle_biedge}, the first check makes sure that no triangles without bidirected edges are missed, as in this case \(G_\text{uni}\) is not acyclic, and it is correctly reported that the graph is not a transmission graph.
In the second step, all possible triangles with at least one bidirected edge are explicitly checked.

In the second case, let $v$ be a vertex with $|N_\text{bi}(v)| > 6$. 
Find a set of $7$ vertices from this set and explicitly check 
if any pair of them closes a triangle.
If there is no triangle, then report that the input 
is not a transmission graph. 
Otherwise, report the triangle.
This again takes $O(n + m)$ time. The correctness follows directly from \autoref{lem:tg_bounded}.
\end{proof}

\section{Conclusion}

We showed that there are robust sublinear 
algorithms for finding a triangle and computing the girth 
in unit disk graph as well as a linear time algorithm for finding 
a triangle in a transmission graph.
Extending the arguments to general disk graphs seems to be hard, 
as the properties given in \autoref{lem:deg_bound} and
\autoref{lem:tg_bounded} do not easily carry over to general disk graphs.
It would be interesting to see if there are properties of transmission 
graphs that allow a sublinear running time similar to the unit disk graph case.

\bibliography{girth}

\end{document}